\theoremstyle{definition}
\newtheorem{lemm}{LEMMA}[section]
\begin{document}

\preprint{}

\title{Dissipative quantum Ising chain as a non-Hermitian Ashkin-Teller model}

\author{Naoyuki Shibata}
\email{shibata-naoyuki@g.ecc.u-tokyo.ac.jp}
\affiliation{Department of Physics, Graduate School of Science, the University of Tokyo, 7-3-1 Hongo, Tokyo 113-0033, Japan
}
\author{Hosho Katsura}
\affiliation{Department of Physics, Graduate School of Science, the University of Tokyo, 7-3-1 Hongo, Tokyo 113-0033, Japan
}
\affiliation{Institute for Physics of Intelligence, The University of Tokyo, 7-3-1 Hongo, Tokyo 113-0033, Japan
}

\date{\today}

\begin{abstract}
	We study a quantum Ising chain with tailored bulk dissipation, which can be mapped onto a non-Hermitian Ashkin-Teller model. By exploiting the Kohmoto-den Nijs-Kadanoff transformation, we further map it to a staggered XXZ spin chain with pure-imaginary anisotropy parameters. This allows us to study the eigenstates of the original Liouvillian in great detail. We show that the steady state in each parity sector is a completely mixed state. The uniqueness of each is proved rigorously. We then study the decay modes on the self-dual line corresponding to the uniform XXZ chain and obtain an exact formula for the Liouvillian gap $ g $, the inverse relaxation time, in the thermodynamic limit. The gap $ g $ as a function of dissipation strength $ \Delta $ has a cusp, implying a kind of phase transition for the first decay mode.

\end{abstract}

\maketitle


\section{Introduction}
	Open quantum systems have recently attracted much attention in a variety of fields including condensed matter physics~\cite{Diehl2008,Prosen2008,Diehl2010,Diehl2011,Bardyn2013}, quantum information~\cite{Kraus2008,Kastoryano2011}, quantum computing~\cite{Verstraete2009}, and mathematical physics~\cite{Prosen2013}. The Lindblad equation~\cite{Breuer2002} is a general quantum master equation describing such open quantum systems under Markovian and Completely Positive and Trace Preserving (CPTP) conditions. 
In general, 
the analysis of the Lindblad equation is more challenging than that of the Schr\"{o}dinger equations for closed systems, as one has to deal with the space of opertors rather than the Hilbert space. 
One possible approach is to develop some 
approximate methods such as the perturbative 
\cite{Li2014,Znidaric2015} or numerical ones~\cite{Prosen2009,Znidaric2011}. In particular, due to the recent development of machine learning approaches~\cite{Hartmann2019,Nagy2019,Vicentini2019,Yoshioka2019}, the number of these studies has been increasing. Another way which is complementary to the above methods is to construct exactly solvable models. Although many such models are known in closed many-body systems, very few exact results are available for open many-body systems~\cite{Prosen2011,Karevski2013,Prosen2013,Prosen2014,Medvedyeva2016,Shibata2019}.
	 
The quantum Ising chain is a paradigmatic example of an exactly solvable model for closed systems and also serves as a textbook example of a quantum phase transition~\cite{Sachdev1999,Dutta2015}. It is mapped to a 
free-fermion model, and hence integrable, which allows for explicit computation of various quantities. 
With the recent surge of interest in open quantum systems, a number of studies on dissipative quantum Ising models have been reported recently~\cite{Ates2012,DeLuca2013,Horstmann2013,Vasiloiu2018}. 
However, to the best of our knowledge, exact solutions are available only for models subject to dissipation at the end of the chain~\cite{Prosen_2008}. 
	
	In this paper, we present a dissipative 
dissipative quantum Ising model with bulk dissipation, which is integrable with judiciously chosen dissipators and parameters. 
The Hamiltonian and dissipators of the model conserve parity, and hence it has 
two degenerate non-equilibrium steady states (NESSs). By vectorizing the density matrix, one can identify the Liouvillian of the system with a non-Hermitian analog of the Ashkin-Teller model, the Hermitian counterpart of which was studied in Refs.~\cite{kohmoto1981,alcaraz1988, yamanaka1993phase,alcaraz1984finite,yamanaka1994phase}. 
Due to the $\mathbb{Z}_2 \times \mathbb{Z}_2$ symmetry of the model, the space of states splits into four sectors labeled by the eigenvalues of the parity operators. 
In each sector, the Hamiltonian of the model can be mapped to that of a staggered XXZ chain with pure-imaginary anisotropy parameters. 
This 
enables us to prove that the two NESSs are unique. 
By further exploiting this mapping,
we investigate the Liouvillian gap $ g $, the inverse 
relaxation time, and the corresponding first decay mode on the self-dual line at which the bulk Hamiltonian of the XXZ chain is spatially uniform. Due to the non-local nature of the transformation, the boundary terms in the XXZ chain differs from sector to sector. 
With this in mind, we prove rigorously that the Liouvillian gap in two sectors are exactly $ 4\Delta $. 
Furthermore, we find that the gap in the other two sectors in the thermodynamic limit can be obtained analytically from the Bethe ansatz solution of the model~\cite{Qiao2018}.  
Combining these two results, we derive an explicit formula for the global Liouvillian gap $ g $ as a function of the dissipation strength $ \Delta $, which has a cusp at $ \Delta =1/\sqrt{3} $. 

The rest of the paper is organized as follows. In Sec.~\ref{sec:model}, we give a precise definition of the model and discuss its NESSs. In Sec.~\ref{sec:map_to_staggered_XXZ}, we derive the mapping of the model to the staggered XXZ model in a different way from Ref.~\cite{kohmoto1981}. We also consider the boundary terms carefully and show that that of the XXZ chain differs by sectors. In Sec.~\ref{sec:Proof_of_the_uniqueness_of_NESS}, we show a lemma on the largest imaginary part of eigenvalues of non-Hermitian matrix. With the help of this lemma and the above mapping, we prove that the two NESSs constructed in Sec.~\ref{sec:model} are unique. In Sec.~\ref{sec:Liouvillian_gap}, we discuss the Liouvillian gap of the model on the self-dual line in each of four sectors, and obtain the exact formula of the global Liouvillian gap. In Appendix~\ref{app:parafermion}, we present another formalism of the model, i.e., the $ \mathbb{Z}_4 $ parafermion chain with non-Hermitian interactions.

\section{MODEL AND NESS\lowercase{s}}\label{sec:model}
	Consider the Lindblad equation
	\begin{align}
		\dv{\rho}{t}=\mathcal{L}[\rho]\coloneqq-\mathrm{i}[H,\rho]+\sum_{i}\qty(L_i\rho L_i^\dagger-\dfrac{1}{2}\qty{L_i^\dagger L_i,\rho})\label{eq:Lindblad}
	\end{align}
	for a quantum Ising chain described by the Hamiltonian
	\begin{align}
		H&=-h\sum_{i=1}^{N}\sigma_i^z -J\sum_{i=1}^{N} \sigma_i^x\sigma_{i+1}^x\label{eq:boundary_ham}
	\end{align}
	with the following Lindblad operators
	\begin{align}
		L_i&=\sqrt{\Delta_1}\sigma_i^z\; (i=1,\dots, N),\\ L_{N+i}&=\sqrt{\Delta_2}\sigma_i^x\sigma_{i+1}^x\; (i=1,\dots N)\label{eq:boundary_dis}.
	\end{align}
	Here $ \rho $ is the density matrix, $ N $ is the number of sites, $ \sigma_i^\alpha $ $ (i=1,\dots, N,\; \alpha=x,z) $ are Pauli matrices at site $ i $, and $ \Delta_1,\Delta_2\ge 0 $ are 
dissipation strengths. We impose periodic boundary conditions $ \sigma_{N+1}^x=\sigma_{1}^x $. 
Note that the Lindblad operators take the same form as the local Hamiltonians of $ H $. 
Since we can change the signs of $h$ and $J$ by an appropriate unitary transformation, we  can assume that $ h,J\ge 0 $ without loss of generality.  
	
	The model has a conserved charge, the \textit{parity operator}
	\begin{align}
		P\coloneqq \prod_{i=1}^{N}\sigma_i^z,
	\end{align}
	which satisfies
	\begin{align}
		[H,P]=0,\quad [L_i, P]=0\;\; (i=1,\dots, 2N).
	\end{align}
	Due to the existence of this conserved charge, there are two non-equilibrium steady states (NESSs)
	\begin{align}
		\rho_\pm\coloneqq \dfrac{1\pm P}{2^N},
	\end{align}
	which are eigenstates of the Liouvillian with eigenvalue $ 0 $:
	\begin{align}
		\mathcal{L}[\rho_\pm]=0.\label{eq:NESS_fixed_point_cond}
	\end{align}
	We briefly explain this construction of NESSs according to Ref.~\cite{Shibata2019}. Since all Lindblad operators are Hermitian, there is a trivial NESS, i.e., completely mixed state $ \rho_\mathrm{c}\coloneqq \mathds{1}/2^N $. By noting that $ P\rho_\mathrm{c} \propto P $ also satisfies $ \mathcal{L}[P]=0 $, we see that $ \rho_{\pm} $ obeys Eq.~(\ref{eq:NESS_fixed_point_cond}). We note that the operators $ \rho_{\pm} $ are positive semi-definite, as their eigenvalues are nonnegative. 
	
	\begin{figure}
		\centering
		\includegraphics[width=1.0\linewidth]{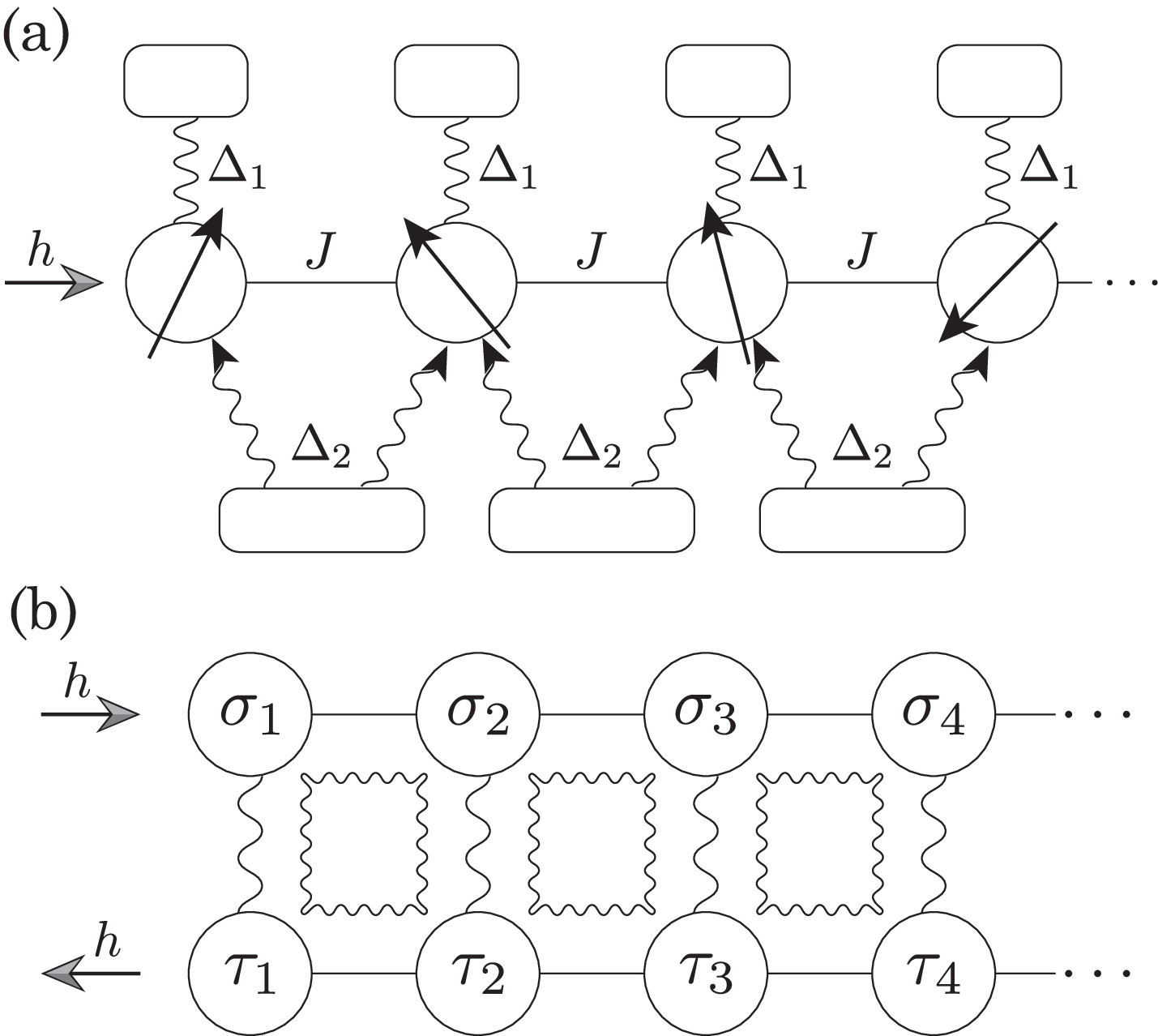}
		\caption{Schematic representations of (a) the set-up and (b) the Liouvillian (times $ \mathrm{i} $) on the $ \mathrm{Ket}\otimes\mathrm{Bra} $ space [see Eqs.~(\ref{eq:Lindbladian_mapped_to_non-Hermitian}) and (\ref{eq:TFIM_ladder})]. Rectangles in (a) denotes reservoirs. Solid and wavy lines in (b) are referred to as Hermitian and non-Hermitian interactions, respectively.}
		\label{fig:schematic_representation_of_set-up}
	\end{figure}
	
\section{Mapping to the non-Hermitian staggered XXZ model}\label{sec:map_to_staggered_XXZ}
	\subsection{Non-Hermitian Ashkin-Teller model and staggered XXZ model}
		
	\begin{figure*}
		\centering
		\includegraphics[width=1.0\linewidth]{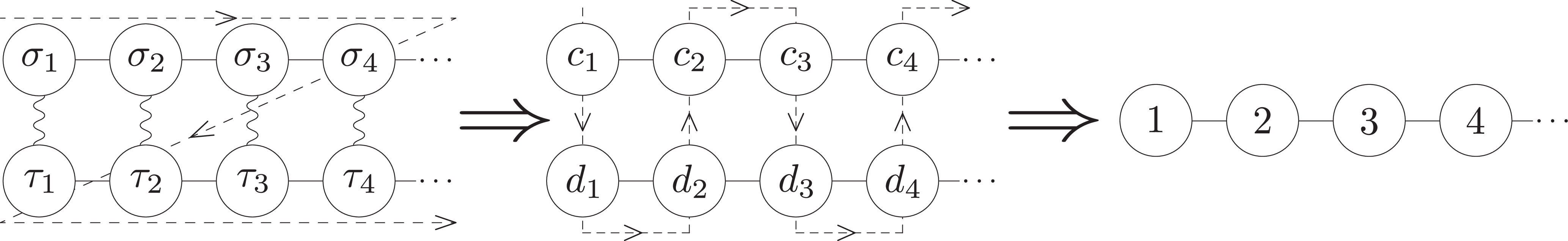}
		\caption{Mapping procedure of our model. The left, middle, and right figures refer to Eqs.~(\ref{eq:TFIM_ladder_bulk}), (\ref{eq:ladder_Majorana}), and (\ref{eq:XXZ_with_pure-imaginary_Delta}), respectively. 
In the left and middle panels, the dashed lines indicate the ordering paths used for the Jordan-Wigner transformation. 
}
		\label{fig:ladderjordan-wignertransformation}
	\end{figure*}
		A $ 2^N\times 2^N $ density matrix $ \rho $ can be thought of as a $ 2^{2N} $-dimensional vector~\cite{Znidaric2014,Znidaric2015,Minganti2018,Shibata2019}. In this sense, we can identify the Liouvillian $ \mathcal{L} $ (times $ \mathrm{i} $) as a non-Hermitian Hamiltonian on a ``$ \mathrm{Ket}\otimes\mathrm{Bra} $ space'' as
		\begin{align}
			\mathrm{i}\mathcal{L}&\cong H\otimes\mathds{1}-\mathds{1}\otimes H^\mathrm{T}\nonumber\\
			&\quad+\mathrm{i}\sum_{i}\qty(L_i\otimes L_i^\ast-\dfrac{1}{2}L_i^\dagger L_i\otimes\mathds{1}-\dfrac{1}{2}\mathds{1}\otimes L_i^\mathrm{T}L_i^\ast),\label{eq:Lindbladian_mapped_to_non-Hermitian}
		\end{align}
		where the Hilbert space of the RHS is the ``$ \mathrm{Ket}\otimes\mathrm{Bra} $ space''. For our model, 
the corresponding non-Hermitian Hamiltonian reads
		\begin{align}
			\mathrm{i}\mathcal{L}+\text{const.}&\cong\mathcal{H}=\mathcal{H}_\mathrm{bulk}+\mathcal{H}_\mathrm{boundary},\label{eq:TFIM_ladder}\\
			\mathcal{H}_\mathrm{bulk}&=-h\sum_{i=1}^{N}\sigma_i^z -J\sum_{i=1}^{N-1} \sigma_i^x\sigma_{i+1}^x\notag\\
			&\hspace{1em}+h\sum_{i=1}^{N}\tau_i^z +J\sum_{i=1}^{N-1} \tau_i^x\tau_{i+1}^x\notag\\
			&\hspace{1em}+\mathrm{i}\Delta_1 \sum_{i=1}^{N} \sigma_i^z \tau_i^z +\mathrm{i}\Delta_2 \sum_{i=1}^{N-1} \sigma_{i}^x\sigma_{i+1}^x\tau_i^x\tau_{i+1}^x,\label{eq:TFIM_ladder_bulk}\\
			\mathcal{H}_\mathrm{boundary}&=J\sigma_{N}^x\sigma_{1}^x+J\tau_{N}^x\tau_{1}^x+\mathrm{i}\Delta_2 \sigma_{N}^x\sigma_{1}^x\tau_{N}^x\tau_{1}^x\label{TFIM_ladder_boundary}
		\end{align}
		where $ \tau_i^\alpha $ ($ \alpha=x,z $) are the Pauli matrices for the $ i $th Bra site (see Fig. \ref{fig:schematic_representation_of_set-up} (b)). First, let us concentrate on the bulk Hamiltonian $ \mathcal{H}_\mathrm{bulk} $. It corresponds to the quantum Ashkin-Teller model~\cite{kohmoto1981,alcaraz1988} with imaginary anisotropy parameters by an appropriate unitary transformation~\footnote{In the case where $ N $ is odd, our model and the quantum Ashkin-Teller model has difference in the boundary terms.}. Furthermore, it is mapped to the staggered XXZ model. It was already mentioned in Refs.~\cite{kohmoto1981,alcaraz1988}, but let us derive it in another way using Majorana fermions. By a Jordan-Wigner transformation
		\begin{align}
			\sigma_i^z&=-\mathrm{i}c_{2i-1}c_{2i},\\
			\sigma_i^x&=\qty(\prod_{j=1}^{i-1}-\mathrm{i}c_{2j-1}c_{2j})c_{2i-1},\\
			\tau_i^z&=-\mathrm{i}d_{2i-1}d_{2i},\\
			\tau_i^x&=\qty(\prod_{j=1}^N-\mathrm{i}c_{2j-1}c_{2j}) \qty(\prod_{k=1}^{i-1}-\mathrm{i}d_{2k-1}d_{2k})d_{2i-1},
		\end{align}
		the model is mapped to the interacting Majorana fermion model with the Hamiltonian (Fig.~\ref{fig:ladderjordan-wignertransformation})
		\begin{align}
			\mathcal{H}_\mathrm{bulk}&= \mathrm{i}h\sum_{i=1}^N c_{2i-1}c_{2i}+\mathrm{i}J\sum_{i=1}^{N-1} c_{2i}c_{2i+1}\notag\\
			&\; -\mathrm{i}h\sum_{i=1}^N d_{2i-1}d_{2i}-\mathrm{i}J\sum_{i=1}^{N-1} d_{2i}d_{2i+1}\notag\\
			&\; -\sum_{i=1}^{N}\mathrm{i}\Delta_1 c_{2i-1}c_{2i}d_{2i-1}d_{2i}\notag\\
			&\, -\sum_{i=1}^{N-1}\mathrm{i}\Delta_2 c_{2i}c_{2i+1}d_{2i}d_{2i+1}.\label{eq:ladder_Majorana}
		\end{align}
		Here, $ c_i $ and $ d_i $ ($ i=1,\dots, 2N $) are Majorana operators obeying the following relations
		\begin{gather}
			c_i^\dagger=c_i,\quad d_i^\dagger=d_i, \\
			\qty{c_i, c_j}=\qty{d_i,d_j}=2\delta_{ij},\quad \qty{c_i,d_j}=0.
		\end{gather}
		Now we apply another Jordan-Wigner transformation 
and rewrite the Majorana fermions as
		\begin{align}
			c_{2i-1}&=(-1)^{i}\qty(\prod_{j=1}^{2i-2}Z_j)X_{2i-1},\\
			d_{2i-1}&=(-1)^{i}\qty(\prod_{j=1}^{2i-2}Z_j)Y_{2i-1},\\
			c_{2i}&=(-1)^{i}\qty(\prod_{j=1}^{2i-1}Z_j)Y_{2i},\\
			d_{2i}&=(-1)^{i}\qty(\prod_{j=1}^{2i-1}Z_j)X_{2i},
		\end{align}
where $ X_i, Y_i $ and $ Z_i $ ($ i=1,\dots 2N $) are Pauli operators at site $ i $. One can express these new Pauli operators as non-local products of the original ones, i.e., $ \sigma_j $ and $ \tau_j $. 
%
Using the new Paulli operators, $ \mathcal{H}_\mathrm{bulk} $ can be recast as
		\begin{align}
			\mathcal{H}_\mathrm{bulk}&=\sum_{i=1}^{N}\qty[h\qty(X_{2i-1} X_{2i}+Y_{2i-1} Y_{2i})+\mathrm{i}\Delta_1Z_{2i-1} Z_{2i}]\notag\\
			&\; +\sum_{i=1}^{N-1}\qty[J\qty(X_{2i}X_{2i+1} +Y_{2i} Y_{2i+1})+\mathrm{i}\Delta_2Z_{2i} Z_{2i+1}],\label{eq:XXZ_with_pure-imaginary_Delta}
		\end{align}
which is nothing but the Hamiltonian of the staggered XXZ model, with the caveat that the two anisotropy parameters are purely imaginary~\footnote{The same model can be constructed from the one-dimensional quantum compass model
			\begin{align}
			H=-\sum_{i=1}^{N/2} J_x\sigma_{2i-1}^x\sigma_{2i}^x -\sum_{i=1}^{N/2-1} J_y\sigma_{2i}^y\sigma_{2i+1}^y
			\end{align}
			with the Lindblad operators
			\begin{align}
			L_{2i-1}=\sqrt{\Delta_1}\sigma_{2i-1}^x\sigma_{2i}^x,\; L_{2i}=\sqrt{\Delta_2}\sigma_{2i}^y \sigma_{2i+1}^y,
			\end{align}
			although there exist many conserved charges in this case, and hence degenerate NESSs.}. 
When $ h=J $ and $ \Delta_1=\Delta_2 $, 
the model reduces to the uniform XXZ chain with pure imaginary anisotropy parameter, which is further investigated in Sec.~\ref{sec:Liouvillian_gap}. From another point of view, the bulk Hamiltonian can be thought of as a $\mathbb{Z}_4$ parafermion chain with non-Hermitian interactions 
(see Appendix~\ref{app:parafermion} for more detail).

	\subsection{Boundary terms}\label{subsec:boundary}
		Here 
we consider the boundary terms in Eqs.~(\ref{eq:boundary_ham}) and (\ref{eq:boundary_dis}), which lead to Eq. (\ref{TFIM_ladder_boundary}). It is useful to define the following operators
	\begin{align}
		P_\sigma\coloneqq \prod_{i=1}^{N} \sigma_{i}^z,\quad P_\tau\coloneqq \prod_{i=1}^{N} \tau_{i}^z.
	\end{align}
	They are conserved charges of $ \mathcal{H} $ satisfying
	\begin{align}
		[\mathcal{H},P_{\sigma/\tau}]=0,\quad [P_\sigma,P_\tau]=0.
	\end{align}
	By the first Jordan-Wigner transformation, 
they can be written in terms of Majorana fermions as
	\begin{align}
		P_\sigma=(-\mathrm{i})^N\prod_{j=1}^{N}c_{2j-1}c_{2j},\quad P_\tau=(-\mathrm{i})^N\prod_{j=1}^{N}d_{2j-1}d_{2j},
	\end{align}
	and the boundary terms are mapped to
	\begin{align}
		\mathcal{H}_\mathrm{boundary}&=-\mathrm{i}J\qty(P_\sigma c_{2N}c_1-P_\tau d_{2N}d_{1})\notag\\
		&\hspace{1em}-\mathrm{i}\Delta_2 P_\sigma P_\tau c_{2N} c_1 d_{2N} d_1.
	\end{align}
	By the second Jordan-Wigner transformation, we have
	\begin{gather}
		P_\sigma=(-1)^N \prod_{j=1}^{2N} Y_j,\quad P_\tau=\prod_{j=1}^{2N} X_j,\\
		\begin{aligned}
			\mathcal{H}_\mathrm{boundary}&=(-1)^N J(P_\tau X_{2N}X_1 +P_\sigma Y_{2N}Y_1)\\
			&\hspace{1em}+\mathrm{i}\Delta_2 P_\sigma P_\tau Z_{2N}Z_1.
		\end{aligned}
	\end{gather}
	We then define a new set of $ \mathbb{Z}_2 $ charges as
	\begin{align}
		Q_Z\coloneqq P_\sigma P_\tau=\prod_{j=1}^{2N}Z_j,\quad Q_X\coloneqq P_\tau =\prod_{j=1}^{2N} X_j,
	\end{align}
in terms of which the 
the boundary terms can be rewritten as
	\begin{align}
		\mathcal{H}_\mathrm{boundary}&=(-1)^N JQ_X(X_{2N}X_1 +Q_Z Y_{2N}Y_1)\notag\\
		&\hspace{1em}+\mathrm{i}\Delta_2 Q_Z Z_{2N}Z_1.
    \label{eq:boundary_XXZ}
	\end{align}
Equation (\ref{eq:boundary_XXZ}) clearly shows that the Hilbert space of the Hamiltonian $ {\cal H} = {\cal H}_{\rm bulk} + {\cal H}_{\rm boundary} $ splits into the following sectors labeled by the eigenvalues of $ Q_Z $ and $ (-1)^N Q_X $: 
	\renewcommand{\labelenumi}{(\roman{enumi})}
	\begin{enumerate}
		\item $ Q_Z=+1,\; (-1)^N Q_X=+1 $: periodic
		
		\item $ Q_Z=+1,\; (-1)^N Q_X=-1 $: anti-periodic
		
		\item $ Q_Z=-1,\; (-1)^N Q_X=+1 $: anti-diagonal twisted~\cite{Batchelor1995,Niekamp2009a}
		
		\item $ Q_Z=-1,\; (-1)^N Q_X=-1 $: anti-periodic and anti-diagonal twisted
	\end{enumerate}
	We now define
	\begin{align}
		\mathcal{H}_\mathrm{boundary}(a, b)&=b J(X_{2N}X_1 +a Y_{2N}Y_1)\notag\\
		&\hspace{1em}+\mathrm{i}\Delta_2 a Z_{2N}Z_1,\label{eq:XXZ_boundary}\\
		\widetilde{\mathcal{H}}(a, b)&=\mathcal{H}_\mathrm{bulk}+\mathcal{H}_\mathrm{boundary}(a,b)
	\end{align}
	where $ a $ and $ b $ are c-numbers, instead of q-numbers, taking $ \pm 1 $. 
One can construct the eigenstates of $ {\cal H} $ from simultaneous eigenstates of $ {\widetilde {\cal H}}(a, \pm 1) $ and $ Q_Z $. This can be seen by noting that if $ \ket{\phi} $ is a simultaneous eigenstate of $ {\widetilde {\cal H}}(a, \pm 1) $ and $ Q_Z $, then $ (\ket{\phi}\pm(-1)^N Q_X\ket{\phi})/2 $ is an eigenstate of $ \mathcal{H} $ with the same eigenvalue.

It is known that $ \widetilde{\mathcal{H}}(a,b) $ for any $ a=\pm 1 $ and $ b=\pm 1 $ is integrable (see, e.g., Ref.~\cite{Niekamp2009a} for more detail). In particular, $ \widetilde{\mathcal{H}}(+1,b) $ has a $ \mathrm{U}(1) $ symmetry, i.e., the $ z $ component of the total spin $ Z^\mathrm{tot}=\sum_i Z_i $ commutes with $ \widetilde{\mathcal{H}}(+1, b) $. Therefore, eigenvalues and eigenvectors in sector (i) and (ii) can be obtained by means of the algebraic Bethe ansatz (ABA). However, for $ \widetilde{\mathcal{H}}(-1,b) $, $ Z^\mathrm{tot} $ is no longer a conserved quantity and the standard ABA fails to work in sector (iii) and (iv). Instead, there is a general method for studying such models without $ \mathrm{U}(1) $ symmetry called the off-diagonal Bethe ansatz (ODBA)~\cite{Niekamp2009a,Cao2013,Wang2015,Qiao2018}.
	
\section{Proof of the uniqueness of the two NESS\lowercase{s}}\label{sec:Proof_of_the_uniqueness_of_NESS}
	
	It can be proved that all eigenvalues of the Liouvillian $ \mathcal{L} $ have non-positive real parts~\cite{Breuer2002,Rivas2011}. Then, the eigenvalue $ 0 $ of $ \mathcal{L} $, whose corresponding eigenstate is a NESS, has the largest \textit{real} part. After the above mapping, the corresponding eigenvalue of $ \mathcal{H} $ has the largest \textit{imaginary} part and its real part is $ 0 $. 
It turns out that such states are \textit{ferromagnetic states}:
	\begin{align}
		\ket{\Uparrow}\coloneqq \ket{\uparrow\dots\uparrow},\quad \ket{\Downarrow}\coloneqq \ket{\downarrow\dots\downarrow},
	\end{align}
	where $ \ket{\uparrow}_i $ (resp. $ \ket{\downarrow}_i $) is the eigenstates of $ Z_i $ with the eigenvalue $ +1 $ (resp. $ -1 $). This follows from the following lemma.
	\begin{lemm}\label{lem:im_bound}
		Let $ \mathcal{H}=T+\mathrm{i}K $, with $ T^\dagger=T $ and $ K^\dagger=K $, be a non-Hermitian matrix, $ \lambda_i $ be eigenvalues of $ \mathcal{H} $, and $ \epsilon_K $ be the largest eigenvalue of $ K $. Then,
		\begin{align}
			\max_i \qty(\Im\lambda_i)\le \epsilon_K.\label{eq:non-Hermitian_spectrum_bound}
		\end{align}
	\end{lemm}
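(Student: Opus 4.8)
The plan is to use the Rayleigh–Ritz variational characterization of the top eigenvalue of the Hermitian part, applied one eigenvector at a time. Because $\mathcal{H}$ is not Hermitian we cannot diagonalize it in an orthonormal basis, but the bound (\ref{eq:non-Hermitian_spectrum_bound}) concerns each eigenvalue separately, so it suffices to control a single (right) eigenvector.

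First I would fix an eigenvalue $\lambda_j$ of $\mathcal{H}$ and choose a corresponding right eigenvector $\ket{v}$, normalized so that $\braket{v}{v}=1$. Taking the inner product of $\mathcal{H}\ket{v}=\lambda_j\ket{v}$ with $\bra{v}$ gives
\begin{align}
	\lambda_j=\mel{v}{\mathcal{H}}{v}=\mel{v}{T}{v}+\mathrm{i}\mel{v}{K}{v}.
\end{align}
Since $T^\dagger=T$ and $K^\dagger=K$, both $\mel{v}{T}{v}$ and $\mel{v}{K}{v}$ are real numbers, so the above is the decomposition of $\lambda_j$ into its real and imaginary parts; in particular $\Im\lambda_j=\mel{v}{K}{v}$.

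Next I would invoke the standard variational bound for the Hermitian matrix $K$: for any normalized vector $\ket{v}$, $\mel{v}{K}{v}\le\epsilon_K$, where $\epsilon_K$ is the largest eigenvalue of $K$ (expand $\ket{v}$ in an orthonormal eigenbasis of $K$ and bound each eigenvalue by $\epsilon_K$). Combining this with $\Im\lambda_j=\mel{v}{K}{v}$ yields $\Im\lambda_j\le\epsilon_K$, and since $j$ was arbitrary, $\max_i(\Im\lambda_i)\le\epsilon_K$, which is (\ref{eq:non-Hermitian_spectrum_bound}).

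There is no real obstacle here; the only point requiring a moment's care is that non-Hermiticity of $\mathcal{H}$ means its eigenvectors need not be orthogonal, so one must resist the temptation to argue in ``the eigenbasis of $\mathcal{H}$'' — the argument instead works with one normalized eigenvector at a time, for which the quadratic form reasoning above is valid. (If desired, one can also remark that the same argument applied to $-\mathcal{H}$, or to $\mathrm{i}\mathcal{H}$, gives the companion bounds $\min_i(\Im\lambda_i)\ge$ smallest eigenvalue of $K$, etc., which is useful when identifying the states saturating the bound.)
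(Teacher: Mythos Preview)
Your proof is correct and essentially identical to the paper's own argument: pick a normalized right eigenvector, compute $\lambda_j=\mel{v}{T}{v}+\mathrm{i}\mel{v}{K}{v}$, observe both expectation values are real by Hermiticity of $T$ and $K$, and bound $\mel{v}{K}{v}\le\epsilon_K$ by the variational principle. Your additional remarks on the non-orthogonality of eigenvectors and the companion bounds are sound but go slightly beyond what the paper states.
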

	\begin{proof}
		Let $ \ket{\psi_i} $ be a right-eigenstate of $ \mathcal{H} $ with eigenvalue $ \lambda_i $. We can assume that $ \ket{\psi_i} $ is normalized as $ \bra{\psi_i}\ket{\psi_i}=1 $. Then we have
		\begin{align}
			\lambda_i=\bra{\psi_i}\mathcal{H}\ket{\psi_i}=\bra{\psi_i}T\ket{\psi_i}+\mathrm{i}\bra{\psi_i}K\ket{\psi_i}
		\end{align}
		Since $ T $ and $ K $ are Hermitian, $ \bra{\psi_i}T\ket{\psi_i} $ and $ \bra{\psi_i}K\ket{\psi_i} $ are real. Thus we find that
		\begin{align}
			\Im \lambda_i=\bra{\psi_i}K\ket{\psi_i}\le \epsilon_K. 
		\end{align}
		This holds for all $ i $, so Eq. (\ref{eq:non-Hermitian_spectrum_bound}) also holds.
	\end{proof}
	For our Hamiltonian $ \mathcal{H}=\mathcal{H}_\mathrm{bulk}+\mathcal{H}_\mathrm{boundary} $,
	\begin{align}
		K&=(\mathcal{H}-\mathcal{H}^\dagger)/(2\mathrm{i})\\
		&=\Delta_1\sum_{i=1}^N Z_{2i-1}Z_{2i}+\Delta_2\sum_{i=1}^{N-1} Z_{2i}Z_{2i+1}\notag\\
		&\hspace{1em} +\Delta_2 Q_Z Z_{2N} Z_1
	\end{align}
	is already diagonalized. The largest eigenvalue is $ N(\Delta_1+\Delta_2) $ and the corresponding (unique) eigenstates are $ \ket{\Uparrow} $ and $ \ket{\Downarrow} $. Moreover,
	\begin{align}
		T\ket{\Uparrow}=T\ket{\Downarrow}=0,
	\end{align}
	where
	\begin{align}
		T&=(\mathcal{H}+\mathcal{H}^\dagger)/2\\
		&=\sum_{i=1}^{N}h\qty(X_{2i-1} X_{2i}+Y_{2i-1} Y_{2i})\notag\\
		&\hspace{1em}+\sum_{i=1}^{N-1}J\qty(X_{2i}X_{2i+1} +Y_{2i} Y_{2i+1})\notag\\
		&\hspace{1em} +(-1)^N JQ_X(X_{2N}X_1 +Q_Z Y_{2N}Y_1)
	\end{align}
	Therefore, these ferromagnetic states correspond to the two unique NESSs of the original model. One can easily see that the superposition $ (\ket{\Uparrow}+(-1)^N\ket{\Downarrow})/\sqrt{2} $ (resp. $ (\ket{\Uparrow}-(-1)^N\ket{\Downarrow})/\sqrt{2} $) lives in sector (i) (resp. sector (ii)).
	
\section{The Liouvillian gap on the self-dual line}\label{sec:Liouvillian_gap}
In this section, we focus on our model on the \textit{self-dual line}~\cite{kohmoto1981}
	\begin{align}
		h=J,\quad \Delta_1=\Delta_2=\Delta,
	\end{align}
	on which $ \mathcal{H} $ is invariant under the duality transformation~\cite{DiFrancesco1997}
	\begin{align}
		\sigma_i^z=\widetilde{\sigma}_i^x\widetilde{\sigma}_{i+1}^x,\; \tau_i^z=\widetilde{\tau}_i^x\widetilde{\tau}_{i+1}^x,\; \sigma_i^x=\prod_{k=1}^{i}\widetilde{\sigma}_k^z,\; \tau_i^x=\prod_{k=1}^{i}\widetilde{\tau}_k^z 
	\end{align}
	up to boundary terms. 
We investigate how the Liouvillian gap, i.e., the inverse relaxation time, and the first decay mode behave on the self-dual line. Since the overall scale is not important for the analysis, we set $ h=J=1 $ in the following. Let eigenvalues of the Liouvillian $ \mathcal{L} $ be $ \Lambda_i(\mathcal{L}) $. A Liouvillian gap $ g $ is defined as
	\begin{equation}
		g\coloneqq -\max_{\substack{i\\ \Re[\Lambda_i(\mathcal{L})]\ne 0}}\Re[\Lambda_i(\mathcal{L})],\label{eq:def_of_Liouvillian_gap}
	\end{equation}
	hence, the inverse of the relaxation time. It is clear from Eq.~(\ref{eq:TFIM_ladder}) that the Liouvillian gap corresponds to the gap between the first and second largest \textit{imaginary} parts of eigenvalues of $ \mathcal{H} $.
	
	As we have seen in Sec.~\ref{subsec:boundary} and \ref{sec:Proof_of_the_uniqueness_of_NESS}, the Hilbert space is divided into four sectors (i), (ii), (iii), and (iv), and the two ferromagnetic states which correspond to NESSs live in sector (i) and (ii). We now define the \textit{local} Liouvillian gap $ g_\mathrm{(i)} $, $ g_\mathrm{(ii)} $, $ g_\mathrm{(iii)} $, and $ g_\mathrm{(iv)} $ as follows: $ g_\mathrm{(i)} $ and $ g_\mathrm{(ii)} $ are defined as the gap between the first and second largest imaginary part of the eigenvalues of $ \widetilde{\mathcal{H}}(+1,\pm 1) $. Note that the first largest one is $ 2N\Delta $ as we have proved in Sec.~\ref{sec:Proof_of_the_uniqueness_of_NESS}. The other two gaps, $ g_\mathrm{(iii)} $ and $ g_\mathrm{(iv)} $, are defined as the difference between $ 2N\Delta $ and the largest imaginary part of the eigenvalues of $ \widetilde{\mathcal{H}}(-1,\pm 1) $. The \textit{global} Liouvillian gap $ g $ is then obtained as
	\begin{align}
		g=\min\qty(g_\mathrm{(i)},\; g_\mathrm{(ii)},\; g_\mathrm{(iii)},\; g_\mathrm{(iv)}).
	\end{align}
	The main result of this section is that we the exact formula for $ g $ as a function of dissipation strength $ \Delta $ in the thermodynamic limit is obtained as 
	\begin{align}
		g(\Delta)=
		\begin{dcases}
			4\Delta&\qty(0<\Delta\le \dfrac{1}{\sqrt{3}})\\
			2\sqrt{\Delta^2+1}&\qty(\dfrac{1}{\sqrt{3}}\le \Delta)
		\end{dcases},
    \label{eq:global_g}
	\end{align}
(see also Fig.~\ref{fig:liouvillian_gap}). 
	The result implies a kind of phase transition for the first decay mode. In fact, the first decay mode lives in sector (i) and (ii) in the weak dissipation region $ 0<\Delta\le 1/\sqrt{3} $, while it lives in sector (iii) and (iv) in the strong dissipation region $ \Delta>1/\sqrt{3} $, which we discuss in the following subsections. 
	\begin{figure}
		\centering
		\includegraphics[width=1.0\linewidth]{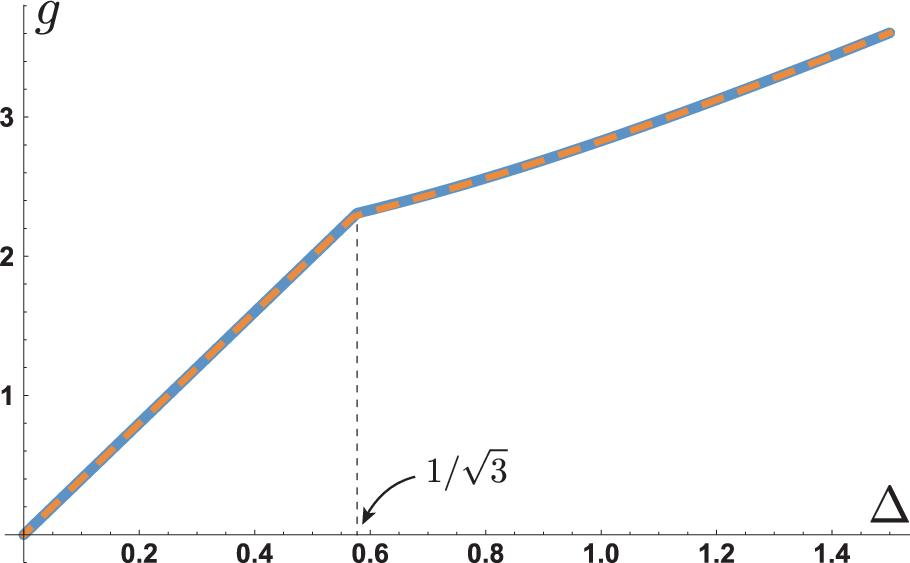}
		\caption{(Color online) Liouvillian gap $ g $ as a function of $ \Delta $. A blue solid line denotes the exact result Eq.~(\ref{eq:global_g}) in the thermodynamic limit possessing a cusp at $ \Delta=1/\sqrt{3} $, while a orange dashed line is the numerical one with $ 2N=10 $.}
		\label{fig:liouvillian_gap}
	\end{figure}
	
	\subsection{In sector (i) and (ii)}
		First, let us consider the lower bound of $ g_\text{(i)} $. In Sec. \ref{sec:Proof_of_the_uniqueness_of_NESS}, we have shown that the largest imaginary part of $ \mathcal{H} $, and also of $ \widetilde{\mathcal{H}}(+1,+1) $, is $ 2N\Delta $ and corresponding eigenstates are ferromagnetic states $ \ket{\Uparrow} $ and $ \ket{\Downarrow} $. Then, the state which has the second largest imaginary part of $ \widetilde{\mathcal{H}}(+1,+1) $ must have at least one up-spin and at least one down-spin. Therefore, there are at least two kinks, which together with Lemma \ref{lem:im_bound} implies that the second largest imaginary part is less than or equal to $ (2N-4)\Delta $. In other words, $ g_\mathrm{(i)}\ge 4\Delta $ holds. Next, let us explicitly construct an eigenstate whose imaginary part of the eigenvalue is $ (2N-4)\Delta $. We denote by $ \ket{i,j} $ the normalized state which has two down-spins at site $ i $ and $ j $, and $ (2N-2) $ up-spins at the rest of chain. Then, one can verify that
		\begin{align}
			\ket{\chi^\text{(i)}}\coloneqq\dfrac{1}{\sqrt{2N}}\sum_{i=1}^{2N}(-1)^{i-1}\ket{i,i+1} \pmod{2N}
		\end{align}
		is an eigenstate of $ \widetilde{\mathcal{H}}(+1,+1) $ with eigenvalue $ (2N-4)\Delta $. The state $ \ket{\chi^\text{(i)}} $ is known as a singular state in the context of the Heisenberg chain~\cite{Avdeev1986,Essler1992,Noh2000,Nepomechie2013}. Therefore, we have proved that $ g_\mathrm{(i)}=4\Delta $ holds and the first decay mode in sector (i) is
		\begin{align}
			\dfrac{(1+(-1)^N Q_X)}{\sqrt{2}}\ket{\chi^\text{(i)}}.
		\end{align}
		
		The Liouvillian gap in sector (ii) $ g_\text{(ii)} $ and the first decay mode can also be obtained in a similar way. One can see that
		\begin{align}
			\ket{\chi^\text{(ii)}}\coloneqq\dfrac{1}{\sqrt{2N}}\sum_{i=1}^{2N-1}(-1)^{i-1}\ket{i,i+1}+\ket{2N,1}
		\end{align}
		is an eigenstate of $ \widetilde{\mathcal{H}}(+1,-1) $ with eigenvalue $ (2N-4)\Delta $. One can also prove that $ g_\text{(ii)}\ge 4\Delta $ using Lemma \ref{lem:im_bound}, and therefore, we have proved rigorously that
		\begin{align}
			g_\text{(ii)}=4\Delta,\label{eq:g_in_sector_ii}
		\end{align}
		and the corresponding first decay mode is
		\begin{align}
			\dfrac{(1-(-1)^N Q_X)}{\sqrt{2}}\ket{\chi^\text{(ii)}}.
		\end{align}
		
	\subsection{In sector (iii) and (iv)}
		\begin{figure*}
			\centering
			\includegraphics[width=1.0\linewidth]{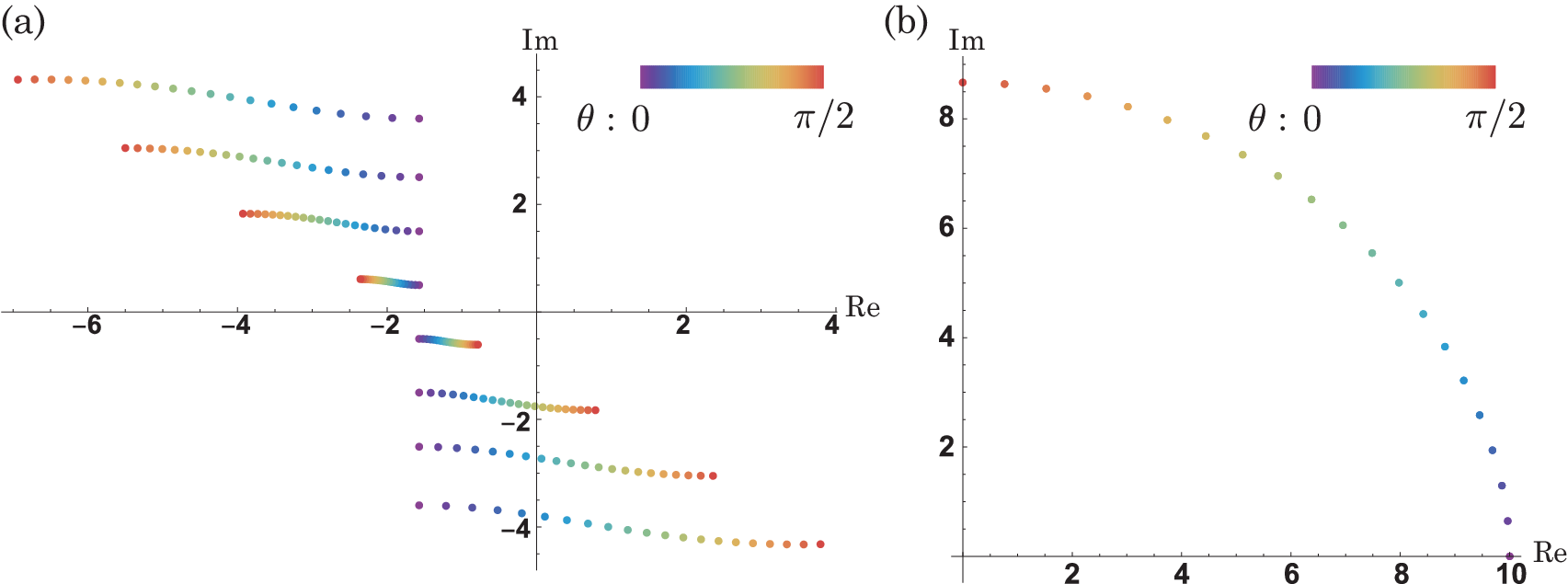}
			\caption{(Color online) Numerical results of (a) Bethe roots and (b) corresponding eigenvalues of $ \sum_{i=1}^{2N}(X_{i}X_{i+1}+Y_{i}Y_{i+1}+\Delta \mathrm{e}^{i\theta}Z_{i}Z_{i+1}) $ under the anti-diagonal twisted boundary condition $ A_{2N+1}=X_1 A_1 X_1 $ ($ A=X,Y,Z $), where $ 2N=8,\; \Delta=\cosh(1.0) $. The results show how the set of Bethe roots and the corresponding eigenvalue move in the complex plane  
as the anisotropy parameter $ \Delta \mathrm{e}^{i\theta} $ varies from $ \theta=0 $ (corresponding to the Hermitian XXZ) to $ \theta=\pi/2 $ (corresponding to $ \widetilde{\mathcal{H}}(-1,+1) $). 
}
			\label{fig:ODBA_analytic_continuation}
		\end{figure*}
		It is known that $ \widetilde{\mathcal{H}}(-1,+1) $ and $ \widetilde{\mathcal{H}}(-1,-1) $ give the same spectrum~\cite{Alcaraz1988a}, which leads to $ g_\mathrm{(iii)}=g_\mathrm{(iv)} $. 
Thus, it suffices to consider only $ \widetilde{\mathcal{H}}(-1,+1) $. In Ref.~\cite{Qiao2018}, the energy of the ferromagnetic XXZ model under this anti-diagonal twisted boundary condition has been studied by the ODBA method. As a result, the authors obtained inhomogeneous Bethe ansatz equations and the formula for the energy $ E $ as
		\begin{widetext}
			\begin{gather}
				\mathrm{e}^{\mathrm{i}u_j}\prod_{l=1}^{2N}\dfrac{\sin(u_j-u_l+\mathrm{i}\eta)}{\sin(u_j+\mathrm{i}\eta/2)}=\mathrm{e}^{-\mathrm{i}u_j}\prod_{l=1}^{2N}\dfrac{\sin(u_j-u_l-\mathrm{i}\eta)}{\sin(u_j-\mathrm{i}\eta/2)}+2\mathrm{i}\mathrm{e}^{-N\eta}\sin(u_j-\sum_{l=1}^{2N}u_l)\quad (j=1,\dots, 2N),\\
				E=-2\mathrm{i}\sinh\eta \sum_{j=1}^{2N} \qty[\cot(u_j+\dfrac{\mathrm{i}\eta}{2})-\cot(u_j-\dfrac{\mathrm{i}\eta}{2})]+2N\sinh\eta+2\sinh\eta,
			\end{gather}
		\end{widetext}
		where $\{ u_j \}$ are the Bethe roots and $ \cosh\eta $ corresponds to the anisotropy parameter.  It is important to note that the ODBA method is also applicable to complex anisotropy parameter, and hence, complex $ \eta $. 
The special case $ \cosh\eta=\mathrm{i}\Delta $ is particularly relevant to the analysis of the gap, as it corresponds to $ \widetilde{\mathcal{H}}(-1,+1) $. 
We confirmed numerically the following two facts:
		\renewcommand{\labelenumi}{\arabic{enumi}.}
		\begin{enumerate}
			\item The Bethe roots for the eigenvalue of $ \widetilde{\mathcal{H}}(-1,+1) $ with the largest imaginary part is obtained by analytic continuation of those for 
the eigenstate of the Hermitian XXZ model with the largest eigenvalue in absolute value (Fig.~\ref{fig:ODBA_analytic_continuation} (b)). 
		 	
			\item The string hypothesis (3.1) of Ref.~\cite{Qiao2018} may also be valid 
for complex anisotropy (Fig.~\ref{fig:ODBA_analytic_continuation} (a)). In particular, the corresponding Bethe roots for $ \widetilde{\mathcal{H}}(-1,+1) $ are obtained by analytic continuation as
			\begin{widetext}
				\begin{align}
				u_j=-\dfrac{\pi}{2}+\qty(\dfrac{2N+1}{2}-j)\mathrm{i}\qty[\ln(\Delta+\sqrt{\Delta^2+1})+\dfrac{\mathrm{i}\pi}{2}]+o(N)\quad (j=1,\dots, 2N).
				\end{align}	
			\end{widetext}
			
		\end{enumerate}
		Then, by using this string hypothesis for the solution, $ g_\mathrm{(iii)} $ in the thermodynamic limit can be obtained as
		\begin{align}
			g_\mathrm{(iii)}=2\sqrt{\Delta^2+1}.\label{eq:g_in_sector_iii}
		\end{align}
		As a result, we obtain the explicit formula for the global Liouvillian gap Eq.~(\ref{eq:global_g}). We have confirmed that this analytical result agrees extremely well with the numerical result  obtained by exact diagonalization for $ 2N=10 $ (see Fig.~\ref{fig:liouvillian_gap}).
		 
		One can check the validity of Eq.~(\ref{eq:g_in_sector_iii}) by considering the Ising limit $ \Delta\to\infty $. 
For notational convenience, we rescale the Hamiltonian 
		\begin{equation}
			\widetilde{\mathcal{H}}(-1,+1) \to 
			\dfrac{1}{\Delta}\widetilde{\mathcal{H}}(-1,+1) = H_0 + V
		\end{equation}
with
		\begin{equation}
			H_0 = \mathrm{i} \sum_{i=1}^{2N} Z_{i}Z_{i+1}
		\end{equation}	
and
		\begin{equation}
			V =  \dfrac{2}{\Delta} \sum_{i=1}^{2N} (S_i^+ S_{i+1}^- +S_i^- S_{i+1}^+),
		\end{equation}
where $ S^\pm_i=(X_i\pm \mathrm{i} Y_i)/2 $ and the anti-diagonal twisted boundary conditions are imposed, i.e., $Z_{2N+1}=-Z_1,\; S_{2N+1}^\pm=S_1^\mp$. 
In the Ising limit, one can first analyze $H_0$, and then treat the remaining term $V$ as a perturbation. 
%
It is seen by inspection that $ (2N-2)\mathrm{i} $ is the eigenvalue of $H_0$ with the largest imaginary part. The corresponding eigenstates with $ Q_Z=-1 $ are the following $ 2N $ states
		\begin{align}
			&\ket{\uparrow \downarrow \dots \downarrow},\; \ket{\uparrow \uparrow \uparrow \downarrow \dots \downarrow},\dots, \ket{\uparrow\dots\uparrow \downarrow},\notag\\
			&\ket{\downarrow \uparrow \dots \uparrow},\; \ket{\downarrow \downarrow \downarrow \uparrow \dots \uparrow},\dots, \ket{\downarrow\dots\downarrow \uparrow}\label{eq:Ising-limit_gs}.
		\end{align}
Each state has two kinks (domain walls), one of which is between sites $2N$ and $1$, and the other of which is between sites $ 2j-1 $ and $ 2j $ ($ j=1,\dots, N $). 
Let $ \mathcal{P} $ be a projection operator onto the subspace spanned by the states in Eq.~(\ref{eq:Ising-limit_gs}). 
We also define 
		\begin{align}
			R\coloneqq \sum_{\substack{n\\ \mathcal{P}\ket{\psi_n}=0}} \dfrac{\ket{\psi_n}\bra{\psi_n}}{(2N-2)\mathrm{i}-E_n},
		\end{align}
		where $ \ket{\psi_n} $ are other eigenstates of the non-perturbed Hamiltonian with eigenvalue $ E_n $. 
Then the effective Hamiltonian from the second order perturbation reads
		\begin{align}
			{\cal H}_{\rm eff} = \mathcal{P}\qty( V+ VRV ) \mathcal{P}.
		\end{align}
		One can see that the first order perturbation term vanishes, and the second order one leads to the constant shift of the eigenvalue by
		\begin{align}
			\dfrac{4}{\Delta^2}\dfrac{1}{(2N-2)\mathrm{i}-(2N-6)\mathrm{i}}=-\dfrac{\mathrm{i}}{\Delta^2}
		\end{align}
		without lifting the degeneracy. Putting all this together, we have
		\begin{align}
			g_\mathrm{(iii)}&=2N\Delta-\qty[(2N-2)\Delta-\dfrac{1}{\Delta}]+\order{\dfrac{1}{\Delta^2}}\\
			&=2\Delta+\dfrac{1}{\Delta}+\order{\dfrac{1}{\Delta^2}},
		\end{align}
		which is consistent with Eq.~(\ref{eq:g_in_sector_iii}).
		 
\section{Summary}
	We have studied a quantum Ising chain with bulk dissipation. By vectorizing the density matrix, we showed that the Liouvillian of the model can be thought of as a non-Hermitian Ashkin-Teller model. 
Then using the Kohmoto-den Nijs-Kadanoff transformation, we further mapped it to 
a staggered XXZ model with pure-imaginary anisotropy parameters. 
This mapping has enabled us to prove that the two NESSs are unique. Furthermore, we obtained the exact results for the Liouvillian gap on the self-dual line corresponding to the uniform XXZ model, which shows an excellent agreement with the numerical results even for small system sizes. 
Though we mostly focused on the self-dual line, it would be interesting to explore the first decay mode in the whole parameter region of $ h,J,\Delta_1 $, and $ \Delta_2 $. Whether it also undergoes the phase transition remains an interesting question for future research.

\begin{acknowledgments}
	The authors thank Nobuyuki Yoshioka and Marko \v{Z}nidari\v{c} for fruitful discussions. H.K. was supported in part by JSPS KAKENHI Grant No. JP18H04478 and No. JP18K03445. N.S. acknowledges support of the Materials Education program for the future leaders in Research, Industry, and Technology (MERIT). 
\end{acknowledgments}

\appendix

\section{Interacting Parafermion Representation}\label{app:parafermion}
	Our model has another representation using $ \mathbb{Z}_4 $ parafermions~\cite{Fendley2012,Moran2017,Calzona2018,Chew2018}. Neglecting the boundary terms, we start from Eq.~(\ref{eq:TFIM_ladder_bulk}). First let us define the following two operators:
	\begin{align}
		\alpha_j&=\dfrac{1+\mathrm{i}}{2}\sigma_j^x+\dfrac{1-\mathrm{i}}{2}(-1)^{j-1}\tau_j^x, \\
		\beta_j&=\dfrac{1}{2}(\sigma_j^z-\tau_j^z)+\dfrac{\mathrm{i}}{2}(-1)^{j}(\sigma_j^x \tau_j^y+\sigma_j^y \tau_j^x),
	\end{align}
	which satisfy
	\begin{align}
		\alpha_j^4=\beta_j^4=1,\quad \alpha_i \beta_j=\mathrm{i}^{\delta_{ij}} \beta_j\alpha_i.
	\end{align}
	One can verify that
	\begin{align}
		\alpha_{i}^\dagger\alpha_{i+1}+\alpha_{i}\alpha_{i+1}^\dagger&=\sigma_{i}^x\sigma_{i+1}^x-\tau_{i}^x\tau_{i+1}^x, \\
		\beta_i+\beta_i^\dagger&=\sigma_{i}^z-\tau_{i}^z. 
	\end{align}
	Then, Eq.~(\ref{eq:TFIM_ladder_bulk}) is rewritten in terms of $\alpha_i$ and $\beta_i$ as
	\begin{align}
		\mathcal{H}_\mathrm{bulk}&=-h\sum_{i=1}^{N}(\beta_i+\beta_i^\dagger)-J\sum_{i=1}^{N-1}(\alpha_i^\dagger\alpha_{i+1}+\alpha_{i}\alpha_{i+1}^\dagger)\notag\\
		&\hspace{1em}-\mathrm{i}\Delta_1 \sum_{i=1}^N\dfrac{\beta_i^2+{\beta_i^\dagger}^2}{2}-\mathrm{i}\Delta_2 \sum_{i=1}^{N-1}\dfrac{(\alpha_i^\dagger\alpha_{i+1})^2+(\alpha_{i}\alpha_{i+1}^\dagger)^2}{2}.
	\end{align}
	Next, we define $ \mathbb{Z}_4 $ parafermion operators as follows:
	\begin{align}
		\gamma_{2i-1}&=\qty(\prod_{j=1}^{i-1}\beta_j)\alpha_i, \\
		\gamma_{2i}&=\mathrm{e}^{\mathrm{i}\pi/4} \qty(\prod_{j=1}^{i-1}\beta_j)\alpha_i\beta_i=\mathrm{e}^{\mathrm{i}3\pi/4}\qty(\prod_{j=1}^{i}\beta_j)\alpha_i.
	\end{align}
	They satisfy
	\begin{align}
		\gamma_i^4=1,\; \gamma_i^\dagger=\gamma_i^3,\; \gamma_i \gamma_j=\mathrm{i}\gamma_j\gamma_i\;(i<j).
	\end{align}
	By noting that
	\begin{align}
		\beta_i&=\mathrm{e}^{-\mathrm{i}\pi/4}\gamma_{2i-1}^\dagger \gamma_{2i}, \\
		\alpha_{i}^\dagger\alpha_{i+1}&=\mathrm{e}^{\mathrm{i}3\pi/4}\gamma_{2i}^\dagger \gamma_{2i+1},
	\end{align}
	we obtain
	\begin{align}
		\mathcal{H}_\mathrm{bulk}=&-h\sum_{i=1}^N \qty(\mathrm{e}^{-\mathrm{i}\pi/4}\gamma_{2i-1}^\dagger \gamma_{2i}+\mathrm{e}^{\mathrm{i}\pi/4}\gamma_{2i}^\dagger \gamma_{2i-1})\notag\\
		&-J\sum_{i=1}^{N-1} \qty(\mathrm{e}^{\mathrm{i}3\pi/4}\gamma_{2i}^\dagger \gamma_{2i+1}+\mathrm{e}^{-\mathrm{i}3\pi/4}\gamma_{2i+1}^\dagger \gamma_{2i})\notag\\
		&-\mathrm{i}\Delta_1\sum_{i=1}^N \gamma_{2i-1}^2\gamma_{2i}^2-\mathrm{i}\Delta_2 \sum_{i=1}^{N-1} \gamma_{2i}^2\gamma_{2i+1}^2.
	\end{align}
   The first and second terms are quadratic, while the third and fourth terms are quartic in parafermions. Therefore, our model can be thought of as a $ \mathbb{Z}_4 $ parafermion chain with non-Hermitian quartic interactions (up to boundary terms).

\bibliography{memo}

\end{document}